\newtheorem{thm}{Theorem}
\newtheorem{lem}[thm]{Lemma}
\theoremstyle{definition}
\theoremstyle{remark}
\newcommand{\T}{{\footnotesize \text{T}}}
\begin{document}
%
%
%
%

\title{Conceptual Problems in Quantum Squeezing}
\author[B. Mielnik]{Bogdan Mielnik$^\dagger$}
\address{Department of Physics \\ CINVESTAV-IPN \\  A.P. 14-740 \\ Mexico City 07000, Mexico.}
\author[J. Fuentes]{Jes\'us Fuentes}
\address{LCSB\\University of Luxembourg\\Belvaux L-4367, Luxembourg}
\email{jesus.fuentes@uni.lu}
\subjclass{Primary 8100, 81S07, 81Q93; Secondary 81R15}
\keywords{Quantum dynamics, quantum squeezing, quantum operations}
\date{\today}
\dedicatory{This manuscript was submitted subsequent to the unfortunate passing of Prof. Mielnik}

\begin{abstract}

In studies of quantum squeezing, the emphasis is typically placed more on specific squeezed states and their evolution rather than on the dynamical operations that could simultaneously squeeze a broader range of quantum states, regardless of their initial configuration. We explore new developments in this area, facilitated by gently acting external fields which might induce squeezing of the canonical observables $q$ and $p$ in charged particles. The extensive research in this field has yielded many valuable insights, raising the question of whether there is still room for significant new contributions to our understanding. Nonetheless, we present some exactly solvable instances of this problem, observed in symmetric evolution intervals. These intervals allow for the explicit determination of the temporal dependence of external fields necessary to generate the required evolution operators. Our findings are linked with a straightforward application of Toeplitz matrices, offering a more accessible description of the problem compared to the frequently employed Ermakov-Milne invariants.

\end{abstract}
\label{page:firstblob}
\maketitle

\section{Introduction}
\label{sec:intro}

In traditional control problems in quantum theory, two principal frameworks are often juxtaposed: Schr\"odinger’s picture, focusing on state evolution, and Heisenberg’s, emphasising the evolution of observables. Dirac robustly advocated for the merits of Heisenberg’s representation in a notable polemic article \cite{dirac65}. Regardless of one's stance on his critique, certain practical aspects of his arguments find relevance in contemporary quantum control challenges. This is particularly true for linear transformations of canonical variables $q$ and $p$ including squeezing, as described by Yuen \cite{yuen76}. Yuen's application demonstrates the potential for precise and rapid measurement of free particle position using twisted canonical observables \cite{yuen83}. His techniques in squeezing presaged the advent of quantum tomography \cite{mancini96,manciniar,asorey11}, which represents quantum states through tomographic images derived from the Radon transforms of canonical variables \cite{radon17,dunajski}. This field is currently experiencing a surge of innovative ideas and contributions \cite{Seth}. Some of the consequences of this standpoint for basic quantum mechanical ideas, though not immediate, are not negligible.

Quantum mechanics formalism posits that every self-adjoint operator in Hilbert space corresponds to a measuring apparatus, capable of an instantaneous assessment, collapsing the wave packet into one of its eigenstates. However, practical laboratory experience suggests a more challenging scenario where measurement is preceded by an auxiliary evolution, and the fundamental indeterministic decision occurs at the final moment.

Our focus is particularly on auxiliary operations such as amplification or squeezing, exploring the prospects of applying Thorne et al.'s concept of demolition-free measurements \cite{nondemolishing78,nondemolishing80}. It is important to note, however, that not all available techniques inspire unequivocal confidence. This includes issues related to decoherence \cite{bohm66}, instability \cite{haroche98}, and delayed choice \cite{wheeler84}. Recently, even entangled states and their radiation effects have come under scrutiny \cite{Lloyd12,ma12,dolev,price}.

For these reasons, we focus exclusively on soft control techniques that avoid abrupt changes and minimise radiative pollution. Although our problem is elementary, it is not entirely trivial. In existing theory, it is not always assured that every well-defined unitary operator can be dynamically realised, or at least approximated, by realistic motion generators. Hence, we propose a simple class of time-dependent Hamiltonians to investigate whether they can effectively generate true squeezing effects in massive particles. We regard variable external fields as the sole plausible source of this phenomenon. Consequently, we disregard formal results derived for time-dependent masses and material constants, among others. Nevertheless, the array of impressive existing results cannot be overlooked \cite{dodonov02}. We also consciously steer clear of the potential complexities of quantum field theory \cite{wheeler84,Lloyd12}. Thus, we aim to confine our approach to a purely quantum mechanical framework, focusing on slow (adiabatic) processes that minimise radiative pollution. For analytical convenience, our calculations were performed using dimensionless variables.

\section{Elementary Evolution Matrices}
\label{sec:duality}

Let us consider a pair of conjugate variables $q$ and $p$, representing the dimensionless canonical position and momentum, respectively. In this dimensionless framework, we define $\tau$ as a dimensionless time parameter, adopting units where the mass $m = 1$ and $\hbar = 1$, thus ensuring $[q, p] = i$. Consider an elastic force represented by a nonsingular, bounded $\beta(\tau)$. The evolution equations generated by the Hamiltonian $H(\tau)=p^2/2 + \beta(\tau)q^2/2$ imply exactly the same linear equations for either classical or quantum canonical variables. These are: $dq/d\tau= p(\tau)$ and $dp/d\tau =-\beta(\tau)q(\tau)$. Over any time interval $[\tau_0, \tau]$, this leads to an identical transformation of the classical or quantum canonical pair. Such transformations are expressed by the same family of $2\times2$ symplectic evolution matrices $u(\tau,\tau_0)$:
\begin{equation}
  \label{eq:2}
  \begin{pmatrix} 
    q(\tau) \\ p(\tau)
  \end{pmatrix} = u(\tau,\tau_0) 
  \begin{pmatrix} q(\tau_0)\\p(\tau_0)\end{pmatrix}; \quad u(\tau_0,\tau_0) = 1,
\end{equation}
determined  by the matrix equations
\begin{equation}
  \label{eq:3}
  \frac{\text{d}}{\text{d}\tau} u(\tau,\tau_0) = \Lambda(\tau)u(\tau,\tau_0); \quad \Lambda(\tau) = \begin{pmatrix}0&&1\\-\beta(\tau)&&0\end{pmatrix}.
\end{equation}
The reciprocity between the classical and quantum pictures does not end up here. It turns out that, in the absence of spin, each unitary evolution operator $U(\tau,\tau_0)$ in $L^2(\mathbb{R})$ generated by the Hamiltonian $H(\tau)$ is determined, up to a phase factor, by the canonical transformation that it induces. This is the consequence of the following simple lemma \cite{reed75,mielnik77,mielnik11,mielnik13}: 

\begin{lem} The family of the unitary operators  $U(\tau,\tau_0)$
describing the evolution generated by the quadratic Hamiltonians $H(\tau)$,  is determined, up to a constant phase factor, by the corresponding matrices  $u(\tau,\tau_0)$. Consequently, it is also determined by the corresponding classical trajectories.
\end{lem}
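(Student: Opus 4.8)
The plan is to show that the quadratic Hamiltonian $H(\tau)=p^2/2+\beta(\tau)q^2/2$ generates a unitary $U(\tau,\tau_0)$ whose action by conjugation on the canonical operators $q$ and $p$ reproduces precisely the symplectic matrix $u(\tau,\tau_0)$, and then to argue that this data fixes $U$ up to a phase. I would start in the Heisenberg picture: defining $q(\tau)=U^\dagger(\tau,\tau_0)\,q\,U(\tau,\tau_0)$ and similarly for $p(\tau)$, the Heisenberg equations of motion with $H(\tau)$ give $dq/d\tau=p(\tau)$ and $dp/d\tau=-\beta(\tau)q(\tau)$, which are exactly the linear equations already recorded in the excerpt. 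Hence the Heisenberg evolution of the operator pair $(q,p)^{\T}$ is governed by the same matrix equation \eqref{eq:3}, so conjugation by $U(\tau,\tau_0)$ acts on $(q,p)^{\T}$ through multiplication by the identical matrix $u(\tau,\tau_0)$. This establishes that the canonical transformation induced by $U$ is determined by, and determines, $u(\tau,\tau_0)$.

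The main step is the converse uniqueness claim: if two unitaries $U_1,U_2$ induce the \emph{same} canonical transformation $u$, then $U_2=e^{i\phi}U_1$ for some real constant $\phi$. I would set $V=U_1 U_2^{\dagger}$ and observe that $V$ commutes with both $q$ and $p$ (since conjugation by $U_1$ and $U_2$ have identical effect on the canonical pair, $V q V^\dagger=q$ and $V p V^\dagger=p$). The heart of the argument is then the irreducibility of the Schr\"odinger representation of the canonical commutation relation $[q,p]=i$ on $L^2(\mathbb{R})$: by the Stone--von Neumann theorem this representation is irreducible, so any bounded operator commuting with both $q$ and $p$ (equivalently with the Weyl unitaries $e^{iaq}$ and $e^{ibp}$) must be a scalar multiple of the identity. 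Since $V$ is unitary, that scalar has modulus one, giving $V=e^{i\phi}\mathbf{1}$ and hence the claimed uniqueness up to phase.

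I would phrase the commutation carefully at the Weyl (exponentiated) level rather than with the unbounded operators $q,p$ directly, to avoid domain subtleties: an operator commuting with $q$ and $p$ only formally need not be scalar, but one commuting with all the bounded Weyl operators genuinely is, by Schur's lemma applied to the irreducible Weyl system. This technical point — passing from the formal commutation with $q,p$ to a rigorous statement about the Weyl algebra — is the step I expect to require the most care, and it is precisely where the cited functional-analytic references (Reed--Simon and the earlier work on this construction) supply the needed irreducibility. The final clause of the lemma, that $U$ is determined by the classical trajectories, then follows immediately: the classical trajectories are the columns of $u(\tau,\tau_0)$ obtained by integrating \eqref{eq:3}, so knowing the trajectories is the same data as knowing $u$, which by the foregoing fixes $U$ up to a constant phase.
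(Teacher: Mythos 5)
Your proposal follows essentially the same route as the paper: form $V=U_1U_2^{\dagger}$, note it commutes with $q$ and $p$, invoke irreducibility of the canonical pair on $L^2(\mathbb{R})$ to conclude $V$ is a scalar, and use unitarity to reduce it to a phase. Your additional care in passing to the bounded Weyl operators to sidestep domain issues with the unbounded $q,p$ is a welcome technical refinement, but it does not change the argument, which the paper states in the same (slightly more informal) form.
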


\begin{proof} 
It is sufficient to observe that if two unitary operators $U_1$ and $U_2$ induce the same transformation of the canonical variables, namely $U_1^{\dagger}qU_1 = U_2^{\dagger}qU_2$ and $U_1^{\dagger}pU_1 = U_2^{\dagger}pU_2$, then $U_1U_2^{\dagger}$ commutes with both $q$ and $p$. Consequently, it also commutes with any function of $q$ and $p$. Given that in $L^2(\mathbb{R})$ the functions of $q$ and $p$ generate an irreducible algebra, $U_1U_2^\dagger$ must be a $c$-number. Being unitary, it can only be a phase factor, $U_1U_2^\dagger = e^{i\varphi} \Rightarrow U_1 = e^{i\varphi}U_2$, where $\varphi \in \mathbb{R}$.
\end{proof}

Any two unitary operators that differ only by a $c$-number phase, even though they act differently on the state vectors, induce the same transformation of quantum states. Therefore, we shall consider them equivalent, $U_1\equiv U_2$. It immediately follows that the trajectories of the classical motion problem with quadratic $H(\tau)$ fully determine the evolution of pure or mixed quantum states $\rho = \rho^\dagger \geq 0$, $\operatorname{Tr}\rho = 1$, and, modulo equivalence, the entire unitary history, which we denote for simplicity as $U(\tau) = U(\tau,\tau_0)$. Our description complements the trends of phase geometry \cite{berry84,anandan90,fernandez94,brody03,wolf04,carlini06,berry09,muga11,wolf12}; while it does not describe geometric phases, it determines alternative aspects of quantum states like the motion of centres, packet shapes, and all nuances of the statistical interpretation.

\section{Classification of the Motion of Massive Particles}
\label{sec:optics}

In quantum optics of coherent photon states, the notion of parametric amplification, as discussed by Mollow and Glauber \cite{mollow67}, plays an important role. Yet, in the description of massive particles the Heisenberg evolution of the canonical observables---the trajectory picture---receives less attention, even though it allows for the extension of optical concepts \cite{wolf04,wolf12}. This is particularly interesting for charged particles in ion traps driven by time-dependent fields, whether or not they coincide with Paul's formula \cite{paul90}. The most interesting here is the case of quite arbitrary periodic potentials.

For a given periodic field $\beta$, where $\beta(\T + \tau) = \beta(\tau)$, the most important matrices \eqref{eq:2} are $u(\T +\tau_0, \tau_0)$ describing the repeated evolution events. Since they are symplectic, their algebraic structure is uniquely defined by a single number $\Gamma=\operatorname{Tr}u(\T +\tau_0, \tau_0) $. (We are not referring to  Ermakov-Milne invariants \cite{ermakov08,milne30,mayo02}.) Though the matrices $u(\T +\tau_0, \tau_0) $ depend on $\tau_0$, $\Gamma$ does not, permitting to classify the evolution processes generated by $\beta$ in any periodicity interval. The distinction between the three types of behaviour is quite elementary:

\begin{enumerate}
\item[I]  If $\vert \Gamma\vert < 2$, regardless of the details, the repeated $\beta$-periods produce an evolution matrix with a pair of eigenvalues $e^{i\sigma}$ and  $e^{-i\sigma}$ ($\sigma \in \mathbb{R}, 0<\sigma<\frac{\pi}{2}$) generating a stable (oscillating) evolution process. It allows the construction of the global creation and annihilation operators $a^+, a^-$ defined by the row eigenvectors of  $u(\T +\tau_0, \tau_0)$, thus characterising the evolution in the whole periodicity interval \cite{mielnik11,mielnik10}.

\item[II] If $\vert\Gamma\vert = 2$, the process generated by $\beta$ falls on the stability threshold with eigenvalues $\pm 1$ permitting to approximate a family of interesting dynamical operations, see also \cite{mielnik13}.

\item[III] If $\vert\Gamma\vert > 2$ then each single-periodic evolution matrix now has a pair of real, non-vanishing eigenvalues, $\lambda^+ = 1/\lambda^-$ with $\lambda^+ = e^\sigma$ and  $\lambda^-= e^{-\sigma}$ producing the squeezing of the corresponding pair of canonical observables $a^\pm$  defined again by the eigenvectors of  $u(\T +\tau_0, \tau_0)$, that is, $a^+$ expands at the cost of contracting $a^-$ or vice-versa.
\end{enumerate}

The above global data seem more relevant than the description in terms of the instantaneous creation and annihilation operators which do not make obvious the  stability thresholds. Specifically, for Paul potentials described by $\beta(\tau ) = \beta_0 + 2\beta_1 \cos \tau$, the map of the squeezing boundary is determined by the Strutt diagram  \cite{bender78}, traditionally limited to describe the ion trapping (in stability areas). Out of them are precisely the squeezing effects in III. To illustrate all this, integrating \eqref{eq:3} for the specific case of Paul's potential is insightful for $(\beta_0,\beta_1)$ out of the stability domain.

\section{Mathieu Squeezing}
\label{sec:mathieu}

\begin{figure}
\centering
\includegraphics[width=8cm]{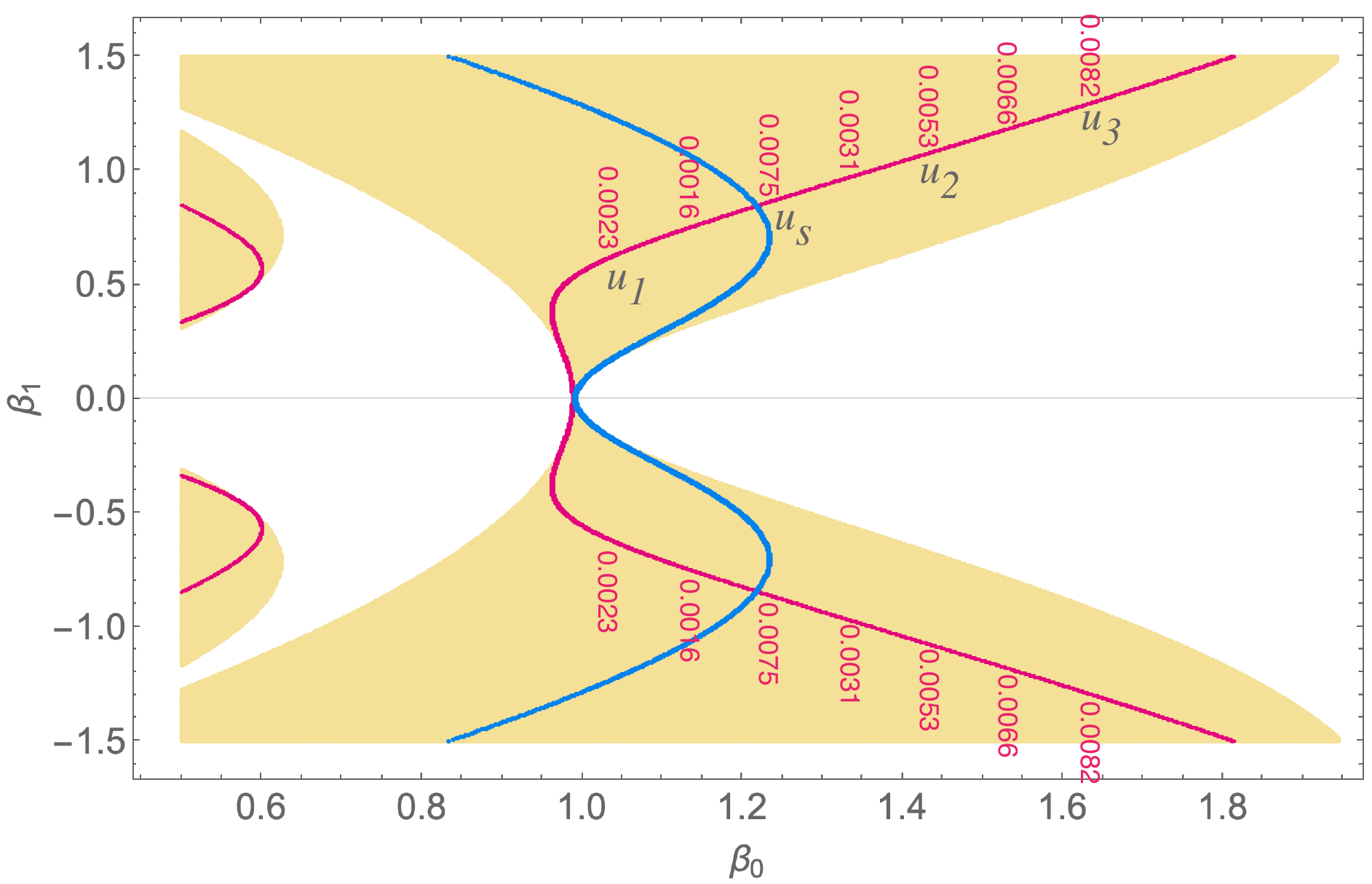}
\caption{In a Paul trap, squeezing is represented by the red curve, which plots the evolution matrices \eqref{eq:2} where $u_{12} \to 0$, within the squeezing region (coloured areas) on the Strutt map. Similarly, the blue curves denote the pairs $(\beta_0, \beta_1)$ for which $u_{21} \to 0$. Points located in the negative portions of the squeezing trajectories signify the occurrence of inverted squeezing effects.}
\label{fig:1}
\end{figure}

As established in previous studies \cite{wolf04, wolf12}, squeezing is unattainable when $\beta(\tau)$ exhibits symmetry within the operational interval. To investigate this further, we opted for numerical integration of Eq. \eqref{eq:3}. For this purpose, we employed a Paul potential, defined as $\beta = \beta_0 + 2\beta_1 \cos\tau$, over the interval $\left[\pi/2, 5\pi/2\right]$. We varied the parameters $\beta_0$ and $\beta_1$ in the squeezing region of the Strutt diagram, as reported in previous works \cite{bender78, mielnik10}.

Following this, we conducted a scanning procedure across regions I-III, with the aim of identifying those evolution matrices $u$ that produce quantum squeezing effects. Our findings, illustrated in Figure \ref{fig:1}, generalise the numerical data presented in \cite{mielnik10}. In the figure, the clear zones correspond to the stability domain (class I), while the coloured zones, shaded in yellow, correspond to the domain of unstable motion (class III), where squeezing effects occur. The boundaries between these clear and coloured areas furnish the class of motion II. The continuous red curves represent the combinations of $\beta_0$ and $\beta_1$ for which the matrix element $u_{12} = 0$ in the evolution matrix $u = u\left(5\pi/2, \pi/2\right)$. Conversely, the curves in blue outline the parameter sets $(\beta_0, \beta_1)$ such that $u_{21} = 0$. The intersection of both curves, denoted by $u_s$, yields the pair $(\beta_0,\beta_1)$ that grants the genuine position and momentum squeezing transformations, $q \rightarrow \lambda q$ and $p \rightarrow p/\lambda $, in that order.

The matrix elements $u_{12}$, represented by the values along the red line in Figure \ref{fig:1}, indicate subtle variations in the effects observed when $q$ is squeezed (or amplified). These variations specifically affect canonical variables, exemplified by $a^- = u_{21}q + 1/\lambda p$. To illustrate this, we selected four distinct evolution matrices. These matrices illustrate varying cases of squeezing, achieved through four different pairs of $(\beta_0,\beta_1)$, as follows:
\begin{equation}
\label{eq:9}
\begin{aligned}
u_1 &= \begin{pmatrix} 0.362 & \sim 0 \\ -1.114 & 2.751 \end{pmatrix}, &
u_2 &= \begin{pmatrix}0.175 & \sim 0\\ 3.501 & 5.798 \end{pmatrix},\\[4pt]
u_3 &= \begin{pmatrix}0.216 & \sim 0\\5.444 & 4.833\end{pmatrix}, & 
u_s&=\begin{pmatrix}0.227 & \sim 0\\ \sim 0 & 4.394\end{pmatrix}.
\end{aligned}
\end{equation}
Correspondingly, the specific matrices $u_1, u_2, u_3$, and $u_s$ were derived using the following sets of amplitude pairs:
\begin{equation*}
(\beta_0,\beta_1) = {(\pi/3,\pi/5), \, (\pi/2, 4\pi/13), \, (9\pi/16,5\pi/11), \, (4\pi/13,11\pi/41)}.
\end{equation*}

While the matrix $u_s$ at the intersection of both curves in the upper (positive) part of the diagram represents the coordinate squeezing $q \rightarrow \lambda q$, \   $p\rightarrow (1/\lambda) p$ with  $\lambda \approx 0.227$, the corresponding intersection on the lower (negative) part represents an inverse operation with  $\lambda^{-1} \approx 4.394$; thus, amplifying $q$ and squeezing $p$.  Henceforth, if the corresponding pulses were to be successively applied to two pairs of electrodes in a cylindric Paul trap, then the particle state would suffer a sequential expansion of its $x$ variable with the simultaneous squeezing of $y$, followed by the inverse amplification of $y$ and squeezing of $x$. It remains an open question whether new squeezing techniques could emerge from the generalisation of the operational techniques of high frequency pulses described in \cite{Itin}.

The reader might find the detailed focus on these minor aspects somewhat tedious. The boring problem of physical units brings, however, additional information. For the dimensionless time $\tau = \omega t$, consider the parameter $\T$. In the expression $\beta(\tau) = e\Phi(t)\T^2/r_0^2 m$, $\T$ represents the period of the oscillating Paul voltage on the trap wall, defined as $\Phi(t) = \Phi_0 + \Phi_1 \cos \omega t$. This leads to $\beta(\tau) = \beta_0 + 2\beta_1 \cos \tau$. Consequently, the same dimensionless matrix $u_s$ in \eqref{eq:9} is generated in $\left[\pi/2,5\pi/2\right]$ by physical parameters such that:
\begin{equation}
\label{eq:10}
\frac{e\Phi_0}{\omega^2 r_0^2m}=\beta_0, \quad \frac{e\Phi_1}{\omega^2 r_0^2m}=2\beta_1.
\end{equation}

For particles of a fixed mass and charge, the varying quantities are the potentials $\Phi$ and the physical time $\T=2\pi/\omega$ of the operations corresponding to the dimensionless interval $\left[\pi/2,5\pi/2\right]$. Hence, for any fixed $r_0$, the smaller $\omega$ is, the lower voltages $\Phi_0$ and $\Phi_1$ required to assure the same result, provided too weak fields do not permit the particle to escape or to collide with the trap surfaces. Consider a proton ($m = m_p \simeq \text{1.67} \times 10^{-24}$g) in an unusually large ion trap with $r_0 = 10$cm. If subjected to a moderately oscillating Paul field corresponding to a 3km long radio wave, one would find $\omega^2 r_0^2 m_p \simeq 10^{-12} \text{g}\frac{\text{cm}^2}{s^2} = \text{1.67} \times 10^{-12}\text{g}\frac{\text{cm}^2}{\text{s}^2} \simeq \text{1.04233}$eV, this leads to estimated voltages of: 
$\Phi_0 \simeq \text{1.0423}\beta_0 \text{V} \simeq \text{1.268}$V and 
$\Phi_1 \simeq \text{2.0846}\beta_1 \text{V} \simeq \text{1.759}$V. 
In an even larger trap, with $r_0 = 100$cm or, alternatively, for $r_0 =10$cm but the frequency 10 times higher, the voltages needed on the walls should be already 100 times higher!

While the analytic expressions in \cite{frenkel01} could yield more exact results, our computational experiments indeed indicate that the phenomena of  $qp$-squeezing can happen in Paul traps. However, these concern only the extremely clean oscillations, without any laser cooling, which are crucial for experimental trapping techniques \cite{paul90}, nor any dissipative perturbations. Moreover, the squeezing effects described by matrices \eqref{eq:9} are transient, manifesting only at precisely defined moments which makes difficult the observation of the phenomenon in the oscillating trap fields.

\section{The Option of Squeezed Fourier}
\label{sec:fourier}

Mathematically, one of the simplest ways to construct quantum operations is to apply sequences of external $\delta$-pulses that interrupt a continuous evolution process---such as the free evolution, the harmonic oscillation, or further interactions \cite{ammann97,fernandez92}. However, this method is limited by the practical challenge of applying $\delta$-pulses to external fields. In the case of squeezing, a more systematic method could involve composing evolution incidents which belong to the equilibrium zone I but their products do not. One possibility is to utilise segments of time-independent oscillator fields with elastic forces $\beta = \kappa^2 = \text{const.}$, generating symplectic rotations: 
\begin{equation}
\label{sprot}
u=\begin{pmatrix}\cos\kappa\tau & \frac{\sin\kappa\tau}{\kappa} \\ -\kappa \sin\kappa\tau & \cos\kappa\tau\end{pmatrix}.
\end{equation}
The simplest instances, obtained when $\cos\kappa\tau=0$, correspond to the squeezed Fourier transformations
\begin{equation} 
\label{sqF}
u=\begin{pmatrix}0&\pm\frac{1}{\kappa}\\ \mp \kappa&0\end{pmatrix}.
\end{equation}
Following the proposals in \cite{fan88} and \cite{grubl89} applying two such steps with different $\kappa$-values generate the evolution matrix:
\begin{equation}
\label{squeeze}
u_\lambda = \begin{pmatrix}0&\pm\frac{1}{\kappa_1}\\ \mp\kappa_1&0\end{pmatrix}\begin{pmatrix}0&\pm\frac{1}{\kappa_2}\\ \mp\kappa_2&0\end{pmatrix}=\begin{pmatrix}\lambda&0\\ 0&\frac{1}{\lambda}\end{pmatrix}; \quad \lambda = -\frac{\kappa_2}{\kappa_1}
\end{equation}
resulting in the squeezing of the canonical pair: $q \rightarrow \lambda q$, $p \rightarrow p/\lambda$, with the effective evolution operator: $U_\lambda = \exp[-i\sigma(pq+qp)/2]; \quad \sigma = \ln\lambda$. However, this requires two different values of $\kappa_1 \neq \kappa_2$ within two distinct time intervals separated by an abrupt potential jump. Here, the times $\tau_1$ and $\tau_2$ can fulfil $\kappa_1\tau_1 = \kappa_2\tau_2 = \pi/2$ to assure that both $\kappa_1$ and $\kappa_2$ grant two distinct squeezed Fourier operations in their time intervals. If one wants to apply two potential steps on the null background, it means at least three jumps ($0 \rightarrow \kappa_1 \rightarrow \kappa_2 \rightarrow 0$). The precise method of approximating a jump in the elastic potential remains a question. Moreover, each $\kappa$-jump indicates an energy transfer to the micro-particle \cite{grubl89}. Hence, could the pair of generalised Fourier operations in \eqref{squeeze} be superposed in a soft way with an identical final result? In fact, as we will demonstrate in the following section, recent advances in the inverse evolution problem reveal the existence of such effects.

\section{Toeplitz Matrices and Exact Operations}
\label{sec:soft}

While the exact expressions in \eqref{squeeze} were already known, it was previously overlooked that they can be produced using the simple anti-commuting algebra of $2\times2$ equidiagonal, symplectic matrices $u$, where $u_{11} = u_{22} = \frac{1}{2} \operatorname{Tr}u$. Remarkably, for any two such matrices $u$ and $v$, their anti-commutator $uv+vu$, as well as the symmetric products $uvu$ and $vuv$, are part of the same family.

The Toeplitz matrices, which have spurred considerable research \cite{bottcher00,trefethen05,Deift}, have not been widely recognised for their fundamental quantum control applications. In our context, \eqref{squeeze} does not require the elimination of jumps. This allows for greater versatility in constructing squeezed Fourier operations. These operations are achieved through symmetric products of multiple small symplectic contributions \eqref{sprot}, each operating over different time intervals with varying $\beta$ fields. Employing segments of symplectic rotations $v_k$, induced by Hamiltonians $H(\tau)$ with specific $\beta = \beta_k$ during time intervals $\Delta\tau_k ~ (k=0,1,2,\ldots)$, allows us to define the symmetric product:
\begin{equation}
\label{eq:12}
u=v_n\ldots v_1v_0v_1\ldots v_n
\end{equation}
which remains symplectic and equidiagonal---akin to the simplest Toeplitz class---with matrix elements $u_{11} = u_{22} = \frac{1}{2}\operatorname{Tr}u$. When \eqref{eq:12} achieves $\operatorname{Tr}u = 0$, the matrix $u$ transforms into a squeezed Fourier matrix. Their continuous counterparts are similarly obtainable.

To achieve this, one must assume that the amplitude $\beta(\tau)$ is symmetric around a certain point $\tau = 0$, thereby ensuring $\beta(\tau) = \beta(-\tau)$. Consequently, by considering the limits of small jumps $du$ caused by applying contributions $dv = \Lambda(\tau)d\tau$ from both sides, one derives the differential equation for $u = u(\tau,-\tau)$ within the expanding interval $[-\tau, \tau]$:
\begin{equation}
\label{eq:13}
\frac{du}{d\tau} = \Lambda(\tau) u+u\Lambda(\tau).
\end{equation}
This anti-commuting form leads easily to an exact solution. Since $\Lambda(\tau)$ is given by \eqref{eq:3}, equation \eqref{eq:13} becomes \cite{mielnik13,mielnik10}:
\begin{equation}
\label{eq:14}
\frac{du}{d\tau} = \begin{pmatrix}u_{21}-\beta u_{12} & \operatorname{Tr}u \\
-\beta\operatorname{Tr}u&u_{21}-\beta u_{12}
\end{pmatrix}
 =(u_{21}-\beta u_{12})\mathbb{I}+\operatorname{Tr}u\begin{pmatrix}0&1\\-\beta & 0\end{pmatrix}.
\end{equation}

For a given symmetric $\beta(\tau)$, the former equation explicitly determines the matrices $u = u(\tau,-\tau)$ for the expanded time interval $[-\tau,\tau]$ in terms of just one function $\theta(\tau) = u_{12}(\tau,-\tau)$. Indeed, as \eqref{eq:14} suggests the same differential equation for both $u_{11}$ and $u_{22}$, namely $du_{11}/d\tau=du_{22}/d\tau=u_{21}-\beta u_{12}$, and given that $u_{11} = u_{22} = 1$ at $\tau = 0$, it follows that $u_{11} = u_{22} = \frac{1}{2}\operatorname{Tr}u = \frac{1}{2}\theta'(\tau)$ for $u$ in any time interval $[-\tau,\tau]$. Furthermore, as $u = u(\tau,-\tau)$ matrices are symplectic, which means $\text{Det~} u = \left[\theta'(\tau)/2\right]^2-\theta u_{21} = 1$, one obtains 
\begin{equation}
\label{u21}
u_{21}=\frac{\left[\frac{1}{2}\theta'(\tau)\right]^2-1}{\theta}.
\end{equation}
Consequently, \eqref{eq:14} specifies the amplitude $\beta(\tau)$ required to generate the matrices $u=u(\tau,-\tau)$. Besides: 
\begin{equation}
\label{eq:15}
\beta u_{12} = u_{21} -\frac{du_{11}}{d\tau}.
\end{equation}
Given that $u_{12}=\theta$, and with $du_{11}/d\tau=\theta''/2$ along with $u_{21}$ as specified in \eqref{u21}, we can deduce that:
\begin{equation}
\label{beta} 
\beta=-\frac{\theta''}{2\theta}+\frac{\left[\frac{1}{2}\theta'(\tau)\right]^2-1}{\theta^2}.
\end{equation}

This solves the symmetric evolution problem for $u$ and $\beta$ in any interval $[-\tau,\tau]$ in terms of one, almost arbitrary function $\theta(\tau)$, restricted by non-trivial conditions at single points only. Hence, \eqref{beta} is indeed an exact solution of the inverse evolution problem, offering $\beta(\tau)$ in terms of the function $\theta(\tau) = u_{1,2}(\tau,-\tau)$ representing the evolution matrices for the expanding (or shrinking) evolution intervals $[\tau,-\tau]$. Note though that the dependence of $u(\tau, \tau_0)$ on $\beta(\tau)$ given by \eqref{beta} in any non-symmetric interval $[\tau, \tau_0]$ requires still an additional integration of \eqref{eq:3} between $\tau_0$ and $\tau$. Some simple algebraic relations of $\beta$ and $\theta$ are worth attention.

\begin{lem} Let $\beta(\tau)$ be defined as in \eqref{beta}, over an interval $[-T, T]$, where $T \in \mathbb{R}$. Further assume that $\theta(\tau)$ is a continuous function and, at least, three times differentiable over the same interval. Then, the following conditions ensure the continuity and differentiability of $\beta$, as well as the dynamical relations between $\theta$ and $\beta$:
  \begin{enumerate}
  \item[\textnormal{i.}] At any point $\tau$ where $\theta(\tau) = 0$,
    it must be the case that $\theta'(\tau) = \pm 2$.
  \item[\textnormal{ii.}] If $\theta'''(\tau) = 0$, then
    $\beta'(\tau) = 0$.
  \item[\textnormal{iii.}] At any point $\tau$ where
    $\theta(\tau) \neq 0$ but $\theta'(\tau) = 0$, Eq. \eqref{eq:14}
    for the interval $[-\tau, \tau]$ represents the squeezed Fourier
    transformation with $\beta(\tau)$ at the endpoints satisfying
    $\beta(\tau)\theta^2 = -\frac{1}{2} \theta''\theta - 1$.
  \end{enumerate}
\end{lem}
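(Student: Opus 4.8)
The plan is to treat all three items as consequences of the single defining relation \eqref{beta}, together with the two structural facts already in hand: that $u=u(\tau,-\tau)$ is symplectic ($\det u = 1$) and that $u_{11}=u_{22}=\tfrac12\theta'$. The cleanest way to organise the computations is to clear denominators once and for all, rewriting \eqref{beta} in the polynomial form
\begin{equation*}
\beta\theta^2 = -\tfrac12\theta''\theta + \tfrac14\bigl((\theta')^2-4\bigr),
\end{equation*}
which is $C^1$ on all of $[-T,T]$ as soon as $\theta\in C^3$. Differentiating this identity and observing that the two $\theta'\theta''$ terms cancel yields the master relation
\begin{equation*}
\frac{d}{d\tau}\bigl(\beta\theta^2\bigr) = -\tfrac12\,\theta\,\theta''' ,
\qquad\text{equivalently}\qquad
\beta'\theta + 2\beta\theta' = -\tfrac12\,\theta''' \quad(\theta\neq0).
\end{equation*}
Each item is then read off from these two displays.

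For item i I would argue directly from the finiteness of the entries of $u$. At a point $\tau$ with $\theta(\tau)=u_{12}=0$, the entry $u_{21}=\bigl[(\tfrac12\theta')^2-1\bigr]/\theta$ from \eqref{u21} stays finite only if its numerator vanishes there, i.e. $(\tfrac12\theta')^2=1$, forcing $\theta'(\tau)=\pm2$; equivalently $\det u = u_{11}u_{22}-u_{12}u_{21}=(\tfrac12\theta')^2=1$ at such a point. This is exactly the condition that removes the $\theta^{-2}$ pole in \eqref{beta}, so it also matches the lemma's ``continuity of $\beta$'' framing. A short Taylor expansion $\theta = \theta'(\tau)(\cdot)+\tfrac12\theta''(\tau)(\cdot)^2+\dots$ then shows the residual $\theta^{-1}$ singularities coming from $-\theta''/2\theta$ and from the linear term of $(\theta')^2-4$ cancel automatically, so $\theta'=\pm2$ is both necessary and sufficient for $\beta$ to be continuous there.

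Item iii is the substitution $\theta'(\tau)=0$. Then $\operatorname{Tr}u=\theta'=0$, so $u_{11}=u_{22}=0$ and $u=\bigl(\begin{smallmatrix}0&\theta\\ u_{21}&0\end{smallmatrix}\bigr)$ is off-diagonal, which is precisely the squeezed Fourier form \eqref{sqF}; feeding $\theta'=0$ into the polynomial identity above gives $\beta\theta^2=-\tfrac12\theta''\theta-1$, the asserted endpoint relation. Item ii then follows from the master relation: imposing $\theta'''(\tau)=0$ kills its right-hand side, leaving $\beta'\theta+2\beta\theta'=0$; at the squeezed-Fourier configurations relevant here (where $\theta'=0$ and $\theta\neq0$) this collapses to $\beta'(\tau)=0$.

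The main obstacle is bookkeeping at the singular points rather than any deep idea. In item i one must justify taking the continuous limit of \eqref{beta} and evaluating $\det u$ through a point where two of the defining expressions individually diverge, which is why the cleared polynomial form and a careful matching of the Taylor coefficients of $\theta$ up to second order are the safe route. The delicate step in item ii is the division by $\theta$ in passing from $(\beta\theta^2)'=0$ to a statement about $\beta'$ alone: the clean conclusion $\beta'=0$ is obtained exactly at the critical points $\theta'=0$ of item iii, and the sharp content in general is the master identity $\beta'\theta+2\beta\theta'=-\tfrac12\theta'''$ itself, so I would display that relation explicitly and deduce item ii as its specialisation.
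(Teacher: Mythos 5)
Your proposal is correct, and it actually proves more than the paper does. The paper's own proof only establishes item iii: it substitutes $\theta'=0$ into \eqref{eq:14} to get the off-diagonal (squeezed Fourier) form $u_{11}=u_{22}=0$, $u_{12}=b$, $u_{21}=-1/b$, and reads off $\beta\theta^2=-\tfrac12\theta''\theta-1$ from \eqref{beta}; items i and ii are left as implicit consequences of \eqref{eq:15}. Your treatment of item iii coincides with the paper's. What you add is (a) the explicit pole-cancellation argument for item i, which is exactly the content behind the phrase ``continuity of $\beta$'' and is equivalent to $\det u=(\tfrac12\theta')^2=1$ when $u_{12}=0$; and (b) the differentiated identity $(\beta\theta^2)'=-\tfrac12\,\theta\,\theta'''$, i.e. $\beta'\theta+2\beta\theta'=-\tfrac12\theta'''$ for $\theta\neq0$, from which item ii is read off. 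Your remark on item ii is a genuine sharpening rather than a pedantic one: eliminating $\theta''$ via \eqref{beta} gives, when $\theta'''=0$, the general relation $\beta'=-2\beta\theta'/\theta$, so the bare implication $\theta'''=0\Rightarrow\beta'=0$ holds only where $\theta'=0$ (or $\beta=0$), and the master identity is the correct general statement. Since the paper never argues item ii at all, displaying that identity and stating the needed qualification is an improvement on, not a deviation from, the published proof.
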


\begin{proof}
The proof follows directly from \eqref{eq:15}. Specifically, \eqref{eq:14} and the initial condition ensure that $u_{11} = u_{22} = \frac{1}{2}\theta'(\tau)$. Therefore, when $\theta' = 0$, both $u_{11} = u_{22} = 0$, leading to $u_{12} = b \neq 0$ and $u_{21} = -1/b$, which corresponds to the general form of the squeezed Fourier transformation outlined in Section \ref{sec:fourier}. Concurrently, \eqref{beta} simplifies to yield 
\[
\beta(\tau)\theta^2 = -\frac{1}{2}\theta''\theta - 1 \implies \beta(\tau)b + \frac{\theta''}{2} + \frac{1}{b} = 0.
\]
In particular, if $\theta''(\tau)b = -2$, then $\beta(\tau) = 0$. This completes the proof.
\end{proof}

A certain curious quid pro quo should be noted. Without entering into the phase problems \cite{chen10,muga11,guerrero11}, we focused on the simplest case of Toeplitz matrices \cite{bottcher00,trefethen05}. This approach, also employed in \cite{mielnik11,mielnik13,mielnik10}, resolves the inverse evolution problem for $\beta(\tau)=\kappa^2(\tau)$ in terms of $\theta(\tau)$, without relying on any auxiliary invariants. However, its purely comparative sense, should be stressed. For a fixed pair of canonical variables $q,p$ it does not give the causally progressing process of the evolution, but rather compares the evolution incidents in a family of expanding intervals $[-\tau, \tau]$. Should one wish to follow the causal development of the classical or quantum systems, the Ermakov-Milne equation \cite{ermakov08,milne30} might be useful. An interrelation between both methods waits still for an exact description. It is not excluded that anti-commutator algebras could also be helpful in some higher-dimensional canonical problems.

\section{Imperfections and Open Problems}
\label{sec:imperfections}

\paragraph{Troubles With Geometry} In numerous laboratories, techniques employed for maintaining and cooling ions are adequate for studying atomic structures, but they fall short in broader applications. The time-dependent oscillator potential is often created on a highly local scale---specifically, in the immediate vicinity of the central axis of a quadrupole trap, which may consist of merely four metal bars \cite{senko02}. The opportunity to extend the reach of oscillator fields for controlling unitary evolution arises either in conventional or cylindrical Paul traps with ideally hyperbolic surfaces, or within the interiors of cylindrical solenoids, provided that the operational area is sufficiently large. If this condition is met, the propagation of the controlling field within the trap introduces an additional challenge.

\paragraph{Control of Currents} While the latter addresses the spatial constraints, it leaves unanswered the question of inducing temporally variable yet spatially homogeneous surface currents independent of the axis coordinate $z$. In the static case, the magnetic field $\mathbf{B}$ within a solenoid is generated by a stationary current $\mathbf{J}$ circulating around the solenoid's surface. Integrating the magnetic field along a closed contour around the solenoid leads to the well-known expression $B = \frac{4\pi}{c} \frac{\Delta J}{\Delta z}$. However, generating time-dependent currents that are uniform across all surface sections, irrespective of $z$, remains a challenge. If a solenoid were constructed from a single spiral wire wrapped around its cylindrical surface and connected at both ends to a potential difference $\Phi(t)$, any minor alteration in $\Phi$ would propagate as a current pulse along the solenoid, thereby creating $z$-dependent fields instead of a quasi-static $B(t)$. A potential workaround might involve covering the cylindrical surface with shorter wires connected to a common variable voltage source, though this is not the only possible solution.

\paragraph{The Model of a Rotating Cylinder} An alternative approach is inspired by an example described by Griffiths \cite{griffiths99}. Consider a cylindrical surface made of a non-conducting material (such as glass) with a radius $R$, uniformly charged with a surface density $\sigma$. Each circular belt of 1cm height would thus carry a charge proportional to $R\sigma$. Such an experimental setup is feasible. If the cylinder has a radius $R = 20$cm and rotates at a frequency $\omega = 1\, \text{s}^{-1}$ around its axis $z$, with each 1cm horizontal belt carrying a charge of 1C, it would generate a homogeneous magnetic field $\mathbf{n} B$ with an intensity given by
\begin{equation}
\label{eq:28}
B = \frac{4\pi}{c}\omega R\sigma \approx 1.25\,\text{G},
\end{equation}
at least under the post-post-Newtonian approximation. By varying the angular velocity $\omega = \omega(t)$, one could create a practically homogeneous magnetic field $\mathbf{n}B(t)$ within a quasi-static environment. The efficacy of such techniques remains an open question.

\paragraph{Time Control} To induce a quantum state operation via variable fields, the micro-object must be exposed to these fields for a precisely defined interval, from the operation's commencement to its conclusion. This is especially critical for operations driven by time-dependent magnetic fields. The charged particle must enter the solenoid in a known initial state at the exact moment a Fourier squeezing operation begins and must be assessed at a specific instant after one or multiple field pattern applications. This synchronisation requirement is often overlooked in quantum control literature. One might consider a long, yet finite, solenoid. Inject a particle at an exact moment into the solenoid at a point $z = z_0$ with a velocity $v_z$. The particle's wave packet propagates and alters its form through the solenoid, eventually arriving at the other end during the time span of one or multiple squeezing operations. A measuring device, such as a photographic plate, records its position on a new orthogonal plane. Several errors are inevitable, arising from the granularity of the detection system and the Heisenberg uncertainty principle, which affects the time of flight. An intriguing feature is that if the squeezing operation amplifies coordinates while shrinking momenta, this is akin to a non-demolition measurement \cite{nondemolishing78, nondemolishing80}, albeit with increased precision in deducing the initial coordinates.

\paragraph{Neglected Perturbations} Our calculations focus solely on pure particle states evolving in slowly varying external fields, while neglecting potential interactions with any residual matter in ion traps or solenoids. We have also omitted the direct packet reflection or absorption by laboratory walls, thus leaving questions open regarding the necessary dimensions of the ion trap or solenoid, as well as the unresolved issue of a time operator, even in the context of flat surfaces. Other potential modifications by dissipative mechanisms such as those described by Lindblad, Gisin, and Percival have also been ignored \cite{lindblad76, gisin92, gisin99}. Several unresolved and perhaps contentious aspects in quantum mechanics remain. Despite these, one should not forget that our techniques are grounded in orthodox quantum mechanics, specifically the evolution matrices \eqref{eq:2}. These low-energy phenomena may be as fundamental to the field of quantum theory as their high-energy counterparts.

\subsection*{Acknowledgments}
We acknowledge the discussions with Dr. Jan Gutt from Physics Department of Polish Academy of Science. J.F. wishes to express profound gratitude to the late Prof. Mielnik, whose intellectual contributions during the period of 2015-2018 were instrumental in shaping the foundations of this work. It is with a sense of deep respect and acknowledgment that this manuscript is submitted after his decease.

\bibliographystyle{spmpsci}
\bibliography{manuscript} 

\end{document}